\theoremstyle{plain}
\newtheorem{theorem}{Theorem}[section]
\newtheorem{corollary}[theorem]{Corollary}
\newtheorem{definition}[theorem]{Definition}
\newtheorem{proposition}[theorem]{Proposition}
\newtheorem{remark}{Remark}
\newtheorem*{problem statement}{Problem Statement}
\newtheorem{feasibility problem}{Problem}
\newtheorem{assumption}{Assumption}
\definecolor{myred}{RGB}{108,0,0}
\newcommand\blfootnote[1]{%
	\begingroup
	\renewcommand\thefootnote{}\footnote{#1}%
	\addtocounter{footnote}{-1}%
	\endgroup
}
\DeclareMathOperator*{\argmin}{arg\,min}
\newcommand{\real}{\mathbb{R}}
\newcommand{\reg}{\mathcal{R}}
\newcommand{\init}{\mathcal{I}}
\newcommand{\algeb}{\mathcal{F}}
\newcommand{\pr}{\mathrm{Pr}}
\newcommand{\borel}{\mathcal{B}}
\newcommand{\normal}{\mathcal{N}}
\newcommand{\expec}{\mathrm{E}}
\newcommand{\cov}{\mathrm{Cov}}
\newcommand{\sys}{\mathcal{S}}
\newcommand{\proj}{\textbf{proj}}
\newcommand{\uns}{\mathrm{x}_{\mathrm{uns}}}
\newcommand{\imc}{\mathrm{imc}}
\newcommand{\Y}{\mathrm{Y}}
\newcommand{\adv}{\boldsymbol{\pi}}
\begin{document}
	\title{Abstracting the Sampling Behaviour of Stochastic Linear Periodic Event-Triggered Control Systems}
	\author{Giannis Delimpaltadakis, Luca Laurenti, and Manuel Mazo Jr.}
	\date{}
	\maketitle
	\begin{abstract}
		\blfootnote{The authors are with the Delft University of Technology, The Netherlands. Emails:\{i.delimpaltadakis, l.laurenti, m.mazo\}@tudelft.nl. This work is partially supported by the ERC Starting Grant SENTIENT (755953).}
		Recently, there have been efforts towards understanding the sampling behaviour of event-triggered control (ETC), for obtaining metrics on its sampling performance and predicting its sampling patterns. Finite-state abstractions, capturing the sampling behaviour of ETC systems, have proven promising in this respect. So far, such abstractions have been constructed for non-stochastic systems. Here, inspired by this framework, we abstract the sampling behaviour of stochastic narrow-sense linear periodic ETC (PETC) systems via Interval Markov Chains (IMCs). Particularly, we define functions over sequences of state-measurements and interevent times that can be expressed as discounted cumulative sums of rewards, and compute bounds on their expected values by constructing appropriate IMCs and equipping them with suitable rewards. Finally, we argue that our results are extendable to more general forms of functions, thus providing a generic framework to define and study various ETC sampling indicators.
	\end{abstract}
	\section{Introduction}
	Event-Triggered Control (ETC), has been thoroughly studied in the past two decades \cite{astrom2002comparison,tabuada2007etc,girard2015dynamicetc,dynamic_stochastic_etc,stochastic_etc_delay, heemels2012periodic,on_etc_stochastic_tac2020, postoyan_interevent, tallapragada, arman_formal_etc,delimpaltadakis_cdc_2020, delimpaltadakis2020traffic, gabriel_hscc}. 
	The vast majority of ETC research (e.g. \cite{astrom2002comparison,tabuada2007etc,heemels2012periodic,girard2015dynamicetc,dynamic_stochastic_etc,stochastic_etc_delay,on_etc_stochastic_tac2020}) has focused on improving triggering conditions and extending them to wider classes of systems. However, little attention has been paid to evaluating and predicting ETC's sampling behaviour. Such questions are of paramount importance, as answering them would enable: a) deriving performance metrics and evaluating a given ETC design, and b) scheduling traffic in networks of ETC loops. 
	
	Towards reasoning about ETC's sampling behaviour, one approach is based on analytic techniques (\hspace{.1mm}\cite{postoyan_interevent} and \cite{tallapragada}). In \cite{postoyan_interevent} it is shown that, for small-enough triggering-condition parameters, \emph{interevent times} (intervals between consecutive events) converge to certain values or periodic patterns. In \cite{tallapragada}, the nonlinear map describing the evolution of interevent times is studied. Despite the interesting results in \cite{postoyan_interevent} and \cite{tallapragada}, there are certain drawbacks: a) they consider only 2-d linear systems, b) they highly depend on which specific triggering condition is studied, and c) they cannot be easily employed to compute ETC performance metrics in a tractable way.
	
	A different approach is based on \emph{abstractions} \cite{arman_formal_etc,delimpaltadakis_cdc_2020,delimpaltadakis2020traffic,gabriel_hscc}, where a given ETC system is abstracted by a finite-state transition system. 
	The set of the abstraction's output sequences contains all sampling patterns that can be exhibited by the ETC system. 
	Compared to \cite{postoyan_interevent} and \cite{tallapragada}, abstraction-based approaches do not focus on systems of specific dimensions and are not dependent on the given triggering condition (except that some steps in the abstraction's construction might vary). But more importantly, the existing algorithms for computations over finite transition systems enable tractable ways of predicting sampling patterns and computing performance metrics. For instance, as argued in \cite{arman_formal_etc,delimpaltadakis_cdc_2020,delimpaltadakis2020traffic}, such abstractions can be used for scheduling traffic in networks of ETC loops, while, in \cite{gabriel_hscc}, abstractions were used to compute the minimum average interevent time of PETC systems. 
	
	So far, such abstractions have been developed only for non-stochastic systems: \cite{arman_formal_etc} and \cite{gabriel_hscc} considered LTI ETC systems, whereas \cite{delimpaltadakis_cdc_2020} addressed homogeneous systems and \cite{delimpaltadakis2020traffic} extended previous work to general nonlinear systems with bounded disturbances. However, for perturbed systems, the abstractions were more conservative, due to the worst-case scenario approach that was followed. Here, we abstract the sampling behaviour of \emph{stochastic} ETC systems. Compared to non-stochastic, and especially unperturbed, stochastic systems are more realistic. Moreover, regarding disturbances, the probabilistic setting is less conservative,
	since it gives probabilistic assurances, according to the disturbances' probability distributions, instead of providing definitive answers on verification questions, which are bound by the worst case.
	
	We consider stochastic narrow-sense linear PETC (periodic ETC; periodic monitoring of the triggering condition) systems and define their sampling behaviour to be the set of all possible sequences of interevent times and state-measurements along with its induced probability measure. Reasoning about the sampling behaviour can be realized via defining functions over such sequences and computing their expectations. Thus, the problem statement of this work is to compute bounds on such expectations. For clarity, we focus on functions that are expressed as discounted cumulative sums of 
	rewards. Such functions can describe various sampling performance indicators, such as the discounted sum of interevent times, quantifying how frequently the system samples. To compute bounds on such expectations, we construct IMCs (interval markov chains; markov chains with interval transition probabilities, see \cite{givan_bmdps}) capturing PETC's sampling behaviour, define appropriate IMC state-dependent rewards, and employ the algorithms of \cite{givan_bmdps} to compute their expected discounted cumulative sums, serving as the bounds we are looking for. To compute the probability intervals, we study the joint probabilities of transitioning from one region of the state-space to another with the interevent time obtaining a specific value. We show that they can be reformulated as optimization problems of integrals of Gaussians evaluated over polytopes, with their mean varying in different polytopes; such problems have been effectively solved in \cite{luca_tac_2021}. 
	Finally, we argue that our framework is extendable to more general functions, like expected or total rewards, $\omega-$regular properties, etc, which can describe a whole range of sampling-behaviour properties.  
	
	Let us summarize this work's contributions. It is the first one to abstract stochastic ETC sampling. Compared to \cite{arman_formal_etc,delimpaltadakis_cdc_2020,delimpaltadakis2020traffic,gabriel_hscc}, it uses a completely different abstraction framework (IMCs), owing to the need for different mathematical tools to analyze stochastic systems. The probabilities of going from one region to another with the interevent time obtaining a specific value, which can be thought of as reachability analysis for stochastic PETC systems, are investigated here for the first time. In contrast to \cite{arman_formal_etc,delimpaltadakis_cdc_2020,delimpaltadakis2020traffic} which are written in the context of ETC traffic scheduling and to \cite{gabriel_hscc} which studies the minimum average inter-event time, the formalism adopted here, introducing functions of sampling sequences and their expectations, is more generic and can be employed to explore a wide range of sampling behaviour properties. Finally, compared to the literature on IMC-abstractions of stochastic systems (e.g. \cite{luca_tac_2021,lahijanian2015dt_imcs,coogan2020}), it is the first one to employ IMCs for computing bounds on quantitative measures over trajectories, such as cumulative rewards (see Remark \ref{remark_imc_reward_contrib}). In fact, it is shown that these bounds are valid for all trajectories, even though the system's state-space is unbounded, in contrast to \cite{luca_tac_2021,lahijanian2015dt_imcs,coogan2020} which consider bounded domains. 

	\section{Preliminaries}
	\subsection{Notation}
	The symbol $\mathbb{N}_{[0,s]}$ denotes the set of natural numbers up to and including $s$. The $n$-dimensional identity matrix is denoted by $I_n$. For any set $S$, denote its Borel algebra by $\mathcal{B}(S)$. For a set $S\subseteq\real^n$, denote $\overline{S} = \real^n\setminus S$. Given a matrix $T\in\real^{m\times n}$, denote $T\cdot S:=\{T x\in\real^m:x\in S\}$. Denote by $S^k$ the $k$-times Cartesian product $S=S\times\dots\times S$. Given $x\in\real^n$, denote both the $k$-times Cartesian product $\{x\}\times\dots\times\{x\}$ and the $kn$-dimensional vector $\begin{bmatrix}
		x^\top &\dots &x^\top
	\end{bmatrix}^\top$ by $\{x\}^k$. Given sets $Q_1,Q_2$ and $Q=Q_1\times Q_2$, for any $q=(q_1,q_2)\in Q$ denote $\proj_{Q_1}(q)=q_1$ and $\proj_{Q_2}(q)=q_2$. We use the term `path' or `sequence' interchangeably. Given a path $\omega = q_0,q_1,q_2,\dots$, denote $\omega(i)=q_i$. Given a finite path $\omega = q_0,\dots,q_N$, denote $\omega(-1)=q_N$. Finally, $\normal(\mu,\Sigma)$ denotes a Gaussian distribution with mean $\mu$ and covariance matrix $\Sigma$.
	
	\subsection{Interval Markov Chains}
	Interval Markov Chains (IMCs) are finite Markov models, extending discrete-time Markov chains by including uncertainty intervals on transition probabilities. They are often employed as \textit{abstractions} of continuous-space stochastic systems and used for verification (e.g. \cite{lahijanian2015dt_imcs,coogan2020,luca_tac_2021}).
	\begin{definition}[Interval Markov Chain (IMC)]
		An IMC is a tuple $\sys_\imc=\{Q, p_{0,\imc}, \check{P} ,\hat{P}\}$, where: $Q$ is a finite set of states, $p_{0,\imc}:Q\to[0,1]$ is a probability distribution on initial conditions, and $\check{P},\hat{P}:Q\times Q\to[0,1]$ are functions, with $\check{P}(q,q')$ and $\hat{P}(q,q')$ representing lower and upper bounds on the probability of transitioning from state $q$ to $q'$, respectively.
	\end{definition} 
	For all $q,q'\in Q$, we have $\check{P}(q,q')\leq\hat{P}(q,q')$ and $\sum_{q'\in Q}\check{P}(q,q')\leq1\leq\sum_{q'\in Q}\hat{P}(q,q')$. A path of an IMC is a sequence of states $\omega = q_0, q_1, q_2,\dots$, with $q_i\in Q$. Denote the set of the IMC's finite paths by $Paths^{fin}(\sys_{\imc})$. 
	By $p_{0,\imc}(q_0)$ we denote the probability that a path's initial condition is $q_0$. Given a state $q\in Q$, a probability distribution $p_{q}:Q\to[0,1]$ is called \textit{feasible} if $\check{P}(q,q')\leq p_{q}(q')\leq\hat{P}(q,q')$ for all $q'\in Q$. Given $q\in Q$, its set of feasible distributions is denoted by $\Gamma_q$. We denote by $\mathcal{D}(Q)=\{p_q:p_q\in\Gamma_q, q\in Q\}$ the set of all feasible distributions for all states.
	\begin{definition}[Adversary]
		Given an IMC $\sys_{\imc}$, an adversary is a function $\adv:Paths^{fin}(\sys_\imc)\to\mathcal{D}(Q)$, such that $\adv(\omega)\in\Gamma_{\omega(-1)}$, i.e. given a finite path it returns a feasible distribution w.r.t. the path's last element.
	\end{definition}
	The set of all adversaries is denoted by $\Pi$. Given 
	a $\adv\in\Pi$, an IMC path evolves as follows: $\omega(0)$ is sampled according to $p_{0,\imc}$, and at any time-step $i$, given $\omega(i)=q$, $\adv$ chooses a distribution $p_{q}\in\Gamma_q$ from which $\omega(i+1)$ is sampled.
	
	IMCs can be equipped with a reward function $R:Q\to\real_{\geq 0}$. Given an adversary $\adv$, the \textit{expected discounted cumulative sum of rewards} along paths $\expec_\adv(\sum_{i=0}^{N}\gamma^iR(\omega(i)))$, where $\gamma\in[0,1)$ and $N\in\mathbb{N}\cup\{\infty\}$, is well-defined and single-valued; however, due to the existence of infinite adversaries, the IMC produces a whole set of expected values. The bounds of this set ($\sup_{\adv\in\Pi}$ and) $\inf_{\adv\in\Pi}\expec_\adv(\sum_i\gamma^iR(\omega(i)))$ can be computed as shown in \cite{givan_bmdps}. 
	Other tractable computations include total or average rewards \cite{givan_bmdps}, verifying $\omega$-regular properties (\hspace{.1mm}\cite{lahijanian2015dt_imcs,luca_tac_2021,coogan2020}), etc.
	
	\subsection{Stochastic Linear PETC Systems}
	Consider the following stochastic linear control system:
	\begin{equation*}
		d\zeta(t) = A\zeta(t)dt + B\upsilon(t)dt +B_wdW(t),
	\end{equation*}
	where: $A\in\real^{n\times n}$, $B\in\real^{n\times n_\upsilon}$, $B_w\in\real^{n\times n_w}$, $\zeta(t)$ is the 
	state of the system, $\upsilon(t)$ is the control input and $W(t)$ is an $n_w$-dimensional Wiener process on a complete filtered probability space $(\Omega,\mathcal{F}, \{\algeb_t\}_{t\geq0}, \pr)$. $\Omega$ denotes the sample space, $\algeb$ a $\sigma$-algebra, $\{\algeb_t\}_{t\geq0}$ the natural filtration and $\pr$ the probability measure. When the initial condition of the above stochastic differential equation is known, say $x\in\real^n$, we denote the process solving it by $\zeta(t;x)$. Initial conditions are sampled from a probability distribution $p_0:\real^n\to[0,1]$.
	
	In typical state-feedback sample-and-hold control, like ETC, the control input is held constant between consecutive \textit{event time-instants} $t_i$ and $t_{i+1}$:
	\begin{equation}\label{snh}
		d\zeta(t) = A\zeta(t)dt + BK\zeta(t_i)dt +B_wdW(t), \quad t\in[t_i,t_{i+1}),
	\end{equation}
	where $K\in\real^{n_\upsilon\times n}$ is the feedback gain matrix. According to PETC, event time-instants $t_i$ are determined as follows:
	\begin{equation}\label{trig_cond}
		t_{i+1} = t_i + \inf\Big\{t\in\init:\text{} \phi\Big(\zeta(t;\zeta(t_i)),\zeta(t_i)\Big)> 0\Big\}
	\end{equation}
	where $t_0=0$, $\init = \{h,2h,\dots,k_\max h\}$, $h>0$ is a predefined \textit{sampling period}, $k_{\max}>0$, $\phi(\cdot,\cdot)$ is called \textit{triggering function}, $\eqref{trig_cond}$ is called \textit{triggering condition} and $t_{i+1}-t_i$ is called \textit{interevent time}. The state is monitored periodically with period $h$, and when the triggering function is detected positive, then an event time-instant $t_{i+1}$ is defined, and the state measurements $\zeta(t_{i+1})$ are communicated to the controller, which updates the control action to $K\zeta(t_{i+1})$. There is a forced upper-bound $k_\max h$ on interevent times, to prevent the system from operating in an open-loop manner indefinitely. We call the combination \eqref{snh}-\eqref{trig_cond} \textit{(stochastic) PETC system}.
	
	Interevent times are a stochastic process depending on the previously sampled state. Thus, for interevent times, we adopt the following notation: $\tau(x) :=\inf\{t\in\init:\text{} \phi(\zeta(t;x),x)> 0\}$, where $x\in\real^n$ is the previously sampled state.
	\begin{remark}
		Since \eqref{snh} is time-homogeneous, reasoning in the time interval $[t_i,t_{i+1})$ is equivalent to reasoning in $[0,t_{i+1}-t_i)$. In this work, we normally use the latter convention.
	\end{remark}
	\begin{assumption}\label{assum1}
		We assume the following:
		\begin{enumerate}
			\item The matrix pair $(A,B_w)$ is controllable.\label{assum1_controllability}
			\item 
			$\phi(\zeta(t;x),x) = |\zeta(t;x)-x|_\infty-\epsilon$, where $\epsilon>0$ is a predefined constant.\label{assum1_trig_fun}
			\item The sampling period $h=1$ (for ease of presentation).\label{assum1_period}
		\end{enumerate}
	\end{assumption}
	Item \ref{assum1_controllability} ensures that $\zeta(t)$ is a non-degenerate Gaussian random variable (see \cite{luca_tac_2021}). 
	Regarding item \ref{assum1_trig_fun}, $\phi$ is the Lebesgue-sampling function \cite{astrom2002comparison} with an $\infty$-norm instead of a $2$-norm. We restrict ourselves to this case for clarity, but our results are extendable to more general functions. 
	\begin{remark}\label{remark_lebesgue_proof}
		Modifying the proof of {\cite[Theorem 1]{on_etc_stochastic_tac2020}}, it can be proven that the triggering function from Assumption \ref{assum1} guarantees mean-square practical stability for PETC system \eqref{snh}-\eqref{trig_cond}, under mild assumptions. The proof is omitted due to space limitations.
	\end{remark}
	
	\section{Problem Formulation}
	The stochastic PETC system \eqref{snh}-\eqref{trig_cond} can exhibit different sequences of communicated measurements and interevent times $(\zeta(t_0),0),(\zeta(t_1;\zeta(t_0)),t_1),(\zeta(t_2-t_1;\zeta(t_1)),t_2-t_1),\dots$, depending on initial conditions and random events. We denote the set of all possible such sequences of infinite length by:
	\begin{align*}
		\Y=\{(x_0,0),(x_1,s_1),(x_2,s_2),\dots| \text{ }x_i\in\real^n, s_i\in\mathbb{N}_{[0,k_\max]}\}
	\end{align*} 
	We call $\Y$ the \textit{sampling behaviours} of the PETC system. There is a well-defined probability measure $\pr_Y$ over $\borel(\Y)$ (according to \cite{ionescu}), induced by $\pr$ and $p_0$ as follows: 
	\begin{align}
		&\pr_\Y(\omega(0)\in (X_0,s)) =\left\{\begin{aligned}&\int_{X_0}p_0(x)dx \text{, if }s=0\\&0, \text{ otherwise}\end{aligned}\right.\label{prob_init_sys}\\
		&\pr_\Y(\omega(i+1)\in (X_{i+1},s_{i+1})|\omega(i)=(x_i,s_i)) =\nonumber\\ &\pr(\zeta(s_{i+1};x_i)\in X_{i+1}, \tau(x_i)=s_{i+1})
	\end{align}
	where $\omega\in\Y$, 
	$s,s_i,s_{i+1}\in\mathbb{N}_{[0,k_\max]}$, $x_i\in\real^n$, $X_0,X_{i+1}\subseteq\real^n$ and we use $(X_0,s)$ (or $(X_{i+1},s_{i+1})$) to denote the set $\{(x,s):x\in X_0\}$. Also, let us denote $Q:=\real^n\times\mathbb{N}_{[0,k_\max]}$. 

	Studying the PETC system's sampling behaviour can be formalized by defining functions $f:\Y\to\real$ and computing their expectation $\expec_{\pr_\Y}(f(\omega))$. Here, we focus on functions that can be described as cumulative discounted rewards:
	\begin{equation}\label{cumulative_sum}
		f(\omega) = \sum_{i=0}^\infty\gamma^iR(\omega(i))
	\end{equation}
	where $R:Q\to[0,R_\max]$ is a bounded reward function. With a suitable choice of $R$, $\expec_{\pr_\Y}(f(\omega))$ can describe various indicators on PETC's sampling behaviours and performance:
	
	\textit{Example 1:} Consider the reward $R((x,s))=s$. Then, $f(\omega)$ represents the discounted sum of interevent times over paths. The expectation $\expec_{\pr_\Y}(f(\omega))$ provides a useful metric on the PETC system's sampling performance: the bigger it is, the bigger are expected to be the interevent intervals, which implies that the expected sampling performance is ``better".
	
	\textit{Example 2:} Consider the reward $R((x,s))=\min(\alpha\tfrac{1}{|x|+\varepsilon}+\beta s,R_\max)$, with $\alpha,\beta,\varepsilon>0$, which penalizes paths that overshoot far from the origin or exhibit a high sampling frequency. Again, a bigger $\expec_{\pr_\Y}(f(\omega))$ implies better performance.
	
	Unfortunately, exactly computing expectations $\expec_{\pr_\Y}(f(\omega))$ is infeasible: among others, how does one obtain $\pr_\Y$ and integrate over an uncountable set of paths $\Y$? This motivates the following problem statement:
	\begin{problem statement}
		Consider the PETC system \eqref{snh}-\eqref{trig_cond}, its sampling behaviours $\Y$ along with $\pr_\Y$, and let Assumption \ref{assum1} hold. 
		Given a reward $R:Q\to[0,R_\max]$ and its discounted cumulative sum $f(\omega)=\sum_{i}\gamma^iR(\omega(i))$, with $\gamma\in[0,1)$, compute (non-trivial) lower/upper bounds on $\expec_{\pr_\Y}(f(\omega))$.
	\end{problem statement}
	In what follows, the problem is addressed by abstracting the sampling behaviour $Y$ and $\pr_\Y$ by an appropriate IMC $\sys_\imc$, defining suitable reward functions $\underline{R},\overline{R}$ over its states, and calculating $\inf_{\adv\in\Pi}\expec_{\adv}(\sum_{i}\gamma^i\underline{R}(\omega(i)))$ and $\sup_{\adv\in\Pi}\expec_{\adv}(\sum_{i}\gamma^i\overline{R}(\omega(i)))$. 
	\begin{remark}
		Our results are extendable to more general functions $f(\omega)$, such as expected or total rewards, $\omega$-regular properties, etc. (by modifying our proofs in accordance to e.g. \cite{givan_bmdps,luca_tac_2021}). Thus, leveraging our results, we could compute bounds on: expectation and variance of interevent times, probability of certain sampling patterns arising, etc.
	\end{remark}
	\section{Abstracting the Sampling Behaviour via IMCs} 
	Typically, to abstract a system with a continuous state-space via a finite-state IMC, the following steps are followed (e.g. \cite{lahijanian2015dt_imcs,coogan2020,luca_tac_2021}): 1) the state-space is partitioned into a finite number of sets, each of which is represented by a state of the IMC, 2) if the state-space is unbounded, then one set of the partition is unbounded as well, and its corresponding IMC-state is made absorbing, and 3) given two states $q_1,q_2$ of the IMC, with $q_1$ non-absorbing, the transition probability intervals from $q_1$ to $q_2$ are computed such that they bound the probability $\pr(\omega(i+1)\in q_2|\omega(i)=x_1)$ for all $x_1\in q_1$, where $\omega$ refers to paths of the original system.
	
	Here, we employ the same ideas. Consider a compact polytope $X\subset\real^n$, $m$ convex polytopes $\reg_i$ such that $\bigcup\limits_{i=1}^{m}\reg_i = X\subset \real^n$ and the following IMC:
	\begin{equation}\label{our_imc}
		\sys_\imc =(Q_\imc,p_{0,\imc}, \check{P},\hat{P}),
	\end{equation}
	where:
	\begin{itemize}
		\item $Q_\imc = (Q_\reg\times\mathbb{N}_{[0,k_\max]})\cup\{\uns\}$, where $Q_\reg = \{\reg_1,\reg_2,\dots,\reg_m\}$ and $\uns$ is an indicator for all states  of the PETC system belonging in $\overline{X}$.
		\item $p_{0,\imc}:Q_\imc\to[0,1]$ is such that: 
		\begin{equation}\label{prob_init_imc}
			p_{0,\imc}(q) = \left\{\begin{aligned}
				&\int_{\proj_{Q_\reg}(q)}p_0(x)dx, \text{ if }q\neq\uns \text{ and}\\&\qquad\qquad\qquad\qquad\proj_{\mathbb{N}_{[0,k_\max]}}(q)=0\\
				&\int_{\overline{X}}p_0(x)dx, \text{ if }q=\uns\\
				&0,\text{ otherwise}
			\end{aligned}\right.
		\end{equation}
		\item $\check{P}$ and $\hat{P}$ are such that $\forall(\reg,s),(\reg',s')\in Q_\imc\setminus\{\uns\}$: 
		\begin{equation}\label{probability_bounds}
			\begin{aligned}
				&\check{P}\Big((\reg,s),(\reg',s')\Big)\leq \min\limits_{x\in\reg}\pr(\zeta(s';x)\in \reg',\tau(x)=s')\\
				&\hat{P}\Big((\reg,s),(\reg',s')\Big)\geq \max\limits_{x\in\reg}\pr(\zeta(s';x)\in \reg',\tau(x)=s')\\
				&\check{P}\Big((\reg,s),\uns\Big)\leq\\&\qquad\qquad\min\limits_{(x,s')\in\reg\times\mathbb{N}_{[0,k_\max]}}\pr(\zeta(s';x)\in \overline{X},\tau(x)=s')\\
				&\hat{P}\Big((\reg,s),\uns\Big)\geq\\&\qquad\qquad\max\limits_{(x,s')\in\reg\times\mathbb{N}_{[0,k_\max]}}\pr(\zeta(s';x)\in \overline{X},\tau(x)=s')
			\end{aligned}
		\end{equation}
		and for all $q'\in Q_\imc$:
		\begin{equation}\label{unsafe_state_probabilities}
			\check{P}(\uns,q')=\hat{P}(\uns,q') = \left\{\begin{aligned}
				&1,\text{ if }q'=\uns\\ &0,\text{ otherwise}
			\end{aligned}\right.
		\end{equation}
	\end{itemize}
	To show how $\sys_\imc$ abstracts the sampling behaviours $\Y$, let us relate paths of the IMC  to paths in $\Y$. First, consider any path $\omega=(x_0,0),(x_1,s_1),\dots\in\Y$ for which $\not\exists j\geq0$ such that $x_j\in\overline{X}$. Such a path is related to a path $\tilde{\omega}$ in the IMC, which is such that $x_i\in\proj_{Q_\reg}(\tilde{\omega}(i))$ and $s_i = \proj_{\mathbb{N}_{[0,k_\max]}}(\tilde{\omega}(i))$ for all $i$. Next, consider paths $\omega=(x_0,0),(x_1,s_1),\dots\in\Y$ for which $\exists j\geq0$ such that $x_j\in\overline{X}$ and $x_i\notin\overline{X}$ for all $i<j$. These are related to IMC paths $\tilde{\omega}$, which are such that $x_i\in\proj_{Q_\reg}(\tilde{\omega}(i))$ and $s_i = \proj_{\mathbb{N}_{[0,k_\max]}}(\tilde{\omega}(i))$ for all $i<j$, and $\tilde{\omega}(i)=\uns$ for all $i\geq j$. Note that $\uns$ is absorbing, and all paths in $\Y$ that enter $\overline{X}$ (even those that eventually return to $X$) are mapped to IMC paths that enter $\uns$ at the same time and stay there.
	
	\begin{remark}
		$X$ is assumed to be a polytope and $\reg_i$ to be convex polytopes in order to facilitate optimization techniques employed later to determine $\check{P}$ and $\hat{P}$.
	\end{remark}
	The above IMC, with a suitable choice of rewards, can be used to address the problem at hand:
	\begin{theorem}\label{main_theorem}
		Consider a reward function $R:Q\to[0,R_\max]$ and $f(\omega)$ as in \eqref{cumulative_sum}. Consider the IMC $\sys_\imc$ from \eqref{our_imc}.
		Define reward functions $\underline{R},\overline{R}:Q_\imc\to[0,R_\max]$: 
		\begin{equation}\label{underline_R}
			\begin{aligned}
				&\underline{R}(q) = \left\{\begin{aligned}
					&\min\limits_{(x,s)\in q}R((x,s)), \text{ if }q\neq\uns\\&\min\limits_{(x,s)\in Q}R((x,s)), \text{ if }q=\uns
				\end{aligned}\right.\\
				&\overline{R}(q) = \left\{\begin{aligned}
					&\max\limits_{(x,s)\in q}R((x,s)), \text{ if }q\neq\uns\\&\max\limits_{(x,s)\in Q}R((x,s)), \text{ if }q=\uns
				\end{aligned}\right.
			\end{aligned}
		\end{equation}
		Then, the following holds:
		\begin{equation*}
			\inf_{\adv\in\Pi}\hspace{-.5mm}\expec_\adv(\sum_{i=0}^\infty\gamma^i\underline{R}(\tilde{\omega}))\leq\expec_{\pr_\Y}(f(\omega))\leq\sup_{\adv\in\Pi}\hspace{-.5mm}\expec_\adv(\sum_{i=0}^\infty\gamma^i\overline{R}(\tilde{\omega}))
		\end{equation*}
	\end{theorem}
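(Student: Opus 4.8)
The plan is to reduce the claim to a pointwise comparison of value functions via dynamic programming, exploiting that $\gamma<1$ makes the relevant Bellman operators contractions. Let $T$ be the Bellman operator of the original $(x,s)$-process, $(TV)(x,s)=R((x,s))+\gamma\,\expec[V(\zeta(\tau(x);x),\tau(x))\mid x]$; since the transition out of $(x,s)$ depends only on $x$ and the process $(x_i,s_i)$ is Markov on $Q$, $T$ is a $\gamma$-contraction on bounded functions, and its unique fixed point $V$ satisfies $\expec_{\pr_\Y}(f(\omega))=\expec_{x_0\sim p_0}[V(x_0,0)]$ (well-definedness of $\pr_\Y$ coming from \cite{ionescu}). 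On the IMC side, let $\overline{V},\underline{V}:Q_\imc\to\real$ be the optimal value functions characterised by the robust Bellman equations $\overline{V}(q)=\overline{R}(q)+\gamma\sup_{p_q\in\Gamma_q}\sum_{q'}p_q(q')\overline{V}(q')$ and the analogous $\inf$-equation for $\underline{V}$; by \cite{givan_bmdps} these equal $\sup_{\adv\in\Pi}\expec_\adv(\sum_i\gamma^i\overline{R}(\tilde{\omega}(i)))$ and $\inf_{\adv\in\Pi}\expec_\adv(\sum_i\gamma^i\underline{R}(\tilde{\omega}(i)))$ when the chain is started from $p_{0,\imc}$. I will prove the upper bound; the lower bound is entirely symmetric, replacing $\overline{R},\overline{V},\sup$ by $\underline{R},\underline{V},\inf$ and reversing every inequality.

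The pivotal step is a feasibility lemma: for every non-absorbing $q=(\reg,s)$ and every $x\in\reg$, the honest one-step law of the PETC process, i.e. the distribution $p_x$ on $Q_\imc$ with $p_x((\reg',s'))=\pr(\zeta(s';x)\in\reg',\tau(x)=s')$ and $p_x(\uns)=\sum_{s'}\pr(\zeta(s';x)\in\overline{X},\tau(x)=s')$, lies in $\Gamma_q$. That $p_x$ is a genuine probability distribution follows because the $\reg_j$ partition $X$, $\overline{X}=\real^n\setminus X$, and $\tau(x)$ is almost surely defined in $\mathbb{N}_{[0,k_\max]}$. That each coordinate respects $\check{P}\le p_x(\cdot)\le\hat{P}$ is exactly the content of \eqref{probability_bounds}, the only genuine verification being the $\uns$-coordinate, where one checks that $[\check{P}((\reg,s),\uns),\hat{P}((\reg,s),\uns)]$ contains the total escape probability summed over interevent times. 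Crucially, this lemma lets me dispense with the fact that the region-projected process is not Markov: I never need the exact per-$x$ probabilities, only that they realise \emph{some} feasible law, which the robust value will dominate.

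With feasibility in hand, define $W:Q\to\real$ by $W(x,s)=\overline{V}(q)$, where $q$ is the IMC state whose region contains $x$, and $q=\uns$ when $x\in\overline{X}$. I claim $TW\le W$ pointwise. For $x\in\reg$ this rests on two facts: $R((x,s))\le\overline{R}(q)$ by \eqref{underline_R}; and $W$ is constant on each region, so $\expec[W(x',s')\mid x]=\sum_{q'}\overline{V}(q')\,p_x(q')\le\sup_{p_q\in\Gamma_q}\sum_{q'}\overline{V}(q')p_q(q')$ by the feasibility lemma, whence $(TW)(x,s)\le\overline{V}(q)=W(x,s)$. For $x\in\overline{X}$ the inequality uses the bounded-reward observation underpinning the whole unbounded-domain treatment: since $R\le\overline{R}(\uns)=\max_{(x,s)\in Q}R((x,s))$ and $\uns$ is absorbing, $\overline{V}(\uns)=\overline{R}(\uns)/(1-\gamma)$ dominates $\overline{V}(q')$ for every $q'$, so $W\le\overline{V}(\uns)$ everywhere and $(TW)(x,s)\le\overline{R}(\uns)+\gamma\overline{V}(\uns)=W(x,s)$. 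Because $T$ is a monotone $\gamma$-contraction, $TW\le W$ yields $V=\lim_n T^nW\le W$. Integrating against $p_0$ and matching it with $p_{0,\imc}$ through \eqref{prob_init_imc} gives $\expec_{\pr_\Y}(f(\omega))=\expec_{x_0\sim p_0}[V(x_0,0)]\le\sum_{q}p_{0,\imc}(q)\,\overline{V}(q)=\sup_{\adv\in\Pi}\expec_\adv(\sum_i\gamma^i\overline{R}(\tilde{\omega}(i)))$, as required.

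The main obstacle is not the algebra but the two structural subtleties that separate this from a textbook finite-state bound. First, the state space is unbounded, so paths may leave $X$ and return; the argument survives only because bounded rewards make the absorbing value $\overline{V}(\uns)$ a uniform over-estimate of the true value on $\overline{X}$ (and $\underline{V}(\uns)$ a uniform under-estimate for the lower bound), which is precisely why paths returning from $\overline{X}$ may be soundly truncated at $\uns$. Second, the honest one-step law $p_x$ depends on the exact sampled state rather than on its region, so the region process is genuinely non-Markov; the value-function/super-solution formulation absorbs this because at each step the robust optimum over $\Gamma_q$ dominates whatever feasible law the true dynamics produce. Verifying the feasibility lemma at the $\uns$-coordinate, and confirming that the $\sup$/$\inf$ over all history-dependent adversaries coincides with the stationary Bellman fixed point \cite{givan_bmdps}, are the two places where care is genuinely needed.
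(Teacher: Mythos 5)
Your proof is correct, but it takes a genuinely different route from the paper's. The paper fixes a single witness adversary $\adv^\star$ that, at each non-absorbing $q$, plays the true one-step law of the PETC process conditioned on the worst-case point $y^\star(q)=\argmin_{y\in q}V(y)$ (resp.\ $\argmax$), and then shows by induction on the value iteration for that fixed adversary that $\underline{V}_{\adv^\star}(q)\leq\min_{y\in q}V(y)$ region-wise and $\underline{V}_{\adv^\star}(\uns)\leq\inf_{y\in Q}V(y)$; the bound on $\inf_{\adv\in\Pi}$ then follows trivially because one adversary already undershoots. You instead work with the optimal robust value function $\overline{V}$ (resp.\ $\underline{V}$) from \cite{givan_bmdps}, lift it to the continuous space as a piecewise-constant function $W$, and show $W$ is a supersolution (resp.\ subsolution) of the continuous Bellman operator, concluding by monotonicity and contraction. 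The two arguments rest on the same two structural facts — feasibility of the true one-step law with respect to \eqref{probability_bounds}, and the observation that the absorbing value $\overline{V}(\uns)=\overline{R}(\uns)/(1-\gamma)$ uniformly dominates the true value on $\overline{X}$ (dually for $\underline{V}(\uns)$) — but they package them differently. Your route buys a cleaner statement (no non-constructive $y^\star(q)$, no per-adversary value iteration) at the cost of invoking the robust Bellman characterisation of $\sup_{\adv\in\Pi}\expec_\adv$; the paper's route avoids that characterisation entirely and, as a by-product, shows every finite truncation of the value iteration already yields a sound bound. One shared caveat: your feasibility lemma needs $\hat{P}((\reg,s),\uns)$ to dominate the escape probability \emph{summed} over $s'\in\mathbb{N}_{[0,k_\max]}$, whereas \eqref{probability_bounds} as written only requires it to dominate the maximum over individual $(x,s')$; you correctly flag this as the point requiring verification, but note that the paper's own check that its witness adversary lies in $\Pi$ rests on exactly the same reading, so this is a shared (and easily repaired) imprecision in the interval definition rather than a gap specific to your argument.
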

	\begin{proof}
		Below, we treat $\uns$ as the set $\overline{X}\times\mathbb{N}_{[0,k_\max]}$; whenever $y\in \overline{X}\times\mathbb{N}_{[0,k_\max]}$ we write $y\in\uns$. Also, for $s_i,s_{i+1}\in\mathbb{N}_{[0,k_\max]}$, $x_i\in\real^n$, $X_{i+1}\subseteq\real^n$,  denote $T((X_{i+1},s_{i+1})|(x_i,s_i)):=\pr_\Y(\omega(i+1)\in (X_{i+1},s_{i+1})|\omega(i)=(x_i,s_i))$ ($T$ is often called \textit{transition kernel}).
		With abuse of notation, we write $\int_QT(dy'|y)$, for some $y\in Q$, to denote $\sum_{s'\in\mathbb{N}_{[0,k_\max]}}\int_{\real^n}T((dx',s')|y)$.
		
		We focus on the lower bound, as the upper bound's proof follows similarly. It suffices to show that $\exists\adv\in\Pi$ such that:
		\begin{equation}\label{prove_this}
			\expec_\adv(\sum_{i=0}^\infty\gamma^i\underline{R}(\tilde{\omega}))\leq\expec_{\pr_\Y}(f(\omega))
		\end{equation}
		We constrain our search to history-independent (\textit{Markovian}) adversaries, which can be defined as $\adv:Q_\imc\to\mathcal{D}(Q_\imc)$. Given that a Markovian adversary, given a state $q\in Q_\imc$, returns a feasible distribution $\adv(q)=p_q\in\Gamma_q$, by abuse of notation we write $\adv(q,q')=p_q(q')$ for any $q'\in Q_\imc$. 
		
		Define the so-called \textit{value functions}:
		\begin{align*}
			&V(y) = R(y) + \gamma\int_{Q}V(y')T(dy'|y), \text{ }\forall y\in Q\\
			&\underline{V}_\adv(q) = \underline{R}(q) + \gamma\sum_{q'\in Q_\imc}\underline{V}_\adv(q')\adv(q,q'), \text{ }\forall q\in Q_\imc
		\end{align*}
		$V(y)$ is the expected value of $f(\omega)$ given that $\omega(0)=y$: $V(y) = \expec_{\pr_\Y}(f(\omega)|\omega(0)=y)$. Similarly, $\underline{V}_\adv(q) = \expec_{\adv}(\sum_{i}\gamma^i\underline{R}(\tilde{\omega}(i))|\tilde{\omega}(0)=q)$ (see \cite{puterman}). Thus:
		\begin{align*}
			\expec_{\pr_\Y}(f(\omega)) &= \int_QV(y)\pr(\omega(0)\in y)dy\nonumber\\
			\expec_{\adv}(\sum_{i=0}^\infty\gamma^i\underline{R}(\tilde{\omega}(i))) &= \sum_{q\in Q_\imc}\underline{V}_\adv(q)p_{0,\imc}(q)
		\end{align*}
		By incorporating \eqref{prob_init_sys} and \eqref{prob_init_imc} to the above equations, we get:
		\begin{equation}\label{exp_sys}
			\begin{aligned}
				&\expec_{\pr_\Y}(f(\omega)) = \int_{\real^n}V((x,0))p_0(x)dx=\\&=\int_{\overline{X}}V((x,0))p_0(x)dx+\sum_{\reg\in Q_\reg}\int_{\reg}V((x,0))p_0(x)dx\\
			\end{aligned}
		\end{equation}
		\begin{equation}\label{exp_imc}
			\begin{aligned}
				&\expec_{\adv}(\sum_{i=0}^\infty\gamma^i\underline{R}(\tilde{\omega}(i)))=\\
				&=\underline{V}_\adv(\uns)\int_{\overline{X}}p_0(x)dx+\sum_{\reg\in Q_\reg}\underline{V}_\adv((\reg,0))\int_{\reg}p_0(x)dx
			\end{aligned}
		\end{equation}
		Observe that if we prove that there exists a $\adv$ such that the following two conditions hold:
		\begin{align}
			&\underline{V}_\adv(\uns)\leq\inf_{y\in Q}V(y),\label{v_uns_leq}\\
			&\forall q\in Q_\imc\setminus\uns:\quad \underline{V}_\adv(q)\leq\min_{y\in q}V(y)\label{v_q_leq}
		\end{align}
		then from \eqref{exp_sys}-\eqref{exp_imc} we have that \eqref{prove_this} holds and the proof is complete. Consider the following adversary for all $q\in Q_\imc$:
		\begin{equation*}
			\adv(q,q') = \left\{\begin{aligned}
				&\int_{q'}T(dy'|y^\star(q)), \text{ if }q\neq\uns\\
				&1, \text{ if }q=q'=\uns,\\
				&0, \text{ otherwise }
			\end{aligned}\right.
		\end{equation*}
		where $y^\star(q)=\argmin_{y\in q}V(y)$. Indeed $\adv\in\Pi$, since $\check{P}(q,q')\leq\adv(q,q')\leq\hat{P}(q,q')$ and $\sum_{q'\in Q_\imc}\adv(q,q')=1$ for all $q\in Q_\imc$. We will show that $\adv$ satisfies \eqref{v_uns_leq} and \eqref{v_q_leq}, thus completing the proof. We use the fact that $\underline{V}_\adv(q)$ is the fixed-point of the following \textit{value iteration} \cite{givan_bmdps}:
		\begin{equation*}\label{val_iter}
			\begin{aligned}
				&\underline{V}_{\adv,0}(q) = \underline{R}(q),\\
				&\underline{V}_{\adv,i+1}(q) = \underline{R}(q) + \gamma\sum_{q'\in Q_\imc}\underline{V}_{\adv,i}(q')\adv(q,q'), \text{ }\forall q\in Q_\imc
			\end{aligned}
		\end{equation*}
		
		Let us now prove \eqref{v_uns_leq} first, via induction. Observe that:
		\begin{equation*}
			\underline{V}_{\adv,0}(\uns) = \underline{R}(\uns)\leq \inf_{y\in Q}V(y)
		\end{equation*}
		due to \eqref{underline_R} and the fact that $V(y)\geq R(y)$ for all $y\in Q$. Now, if we assume that $\underline{V}_{\adv,i}(\uns) \leq \inf_{y\in Q}V(y)$, we have:
		\begin{align*}
			\underline{V}_{\adv,i+1}(\uns) &= \underline{R}(\uns) + \gamma\sum_{q'\in Q_\imc}\underline{V}_{\adv,i}(q')\adv(\uns,q')\\
			&=\underline{R}(\uns) + \gamma\underline{V}_{\adv,i}(\uns)\\
			&\leq\inf_{y\in Q}R(y) + \gamma \inf_{y\in Q}(V(y))\int_{Q}T(dy'|y_0),\forall y_0\\
			&\leq\inf_{y_0\in Q}\Big(R(y_0) + \gamma \int_{Q}V(y')T(dy'|y_0)\Big) \\&= \inf_{y\in Q}V(y),
		\end{align*}
		where in the second step we used that $\adv(\uns,\uns)=1$ and $\adv(\uns,q')=0$ for any $q'\neq\uns$, in the third step we used that $\int_{Q}T(dy'|y_0)=1$ for any $y_0$, and in the fourth step we used that $\inf_{y\in Q}(V(y))\leq V(y')$ for all $y'\in Q$. Thus, by induction, we have proven \eqref{v_uns_leq}.
		
		Finally, let us prove \eqref{v_q_leq}, again by induction. Again from \eqref{underline_R}, we have that $\underline{V}_{\adv,0}(q) \leq \min_{y\in q}V(y)$  for all $q\in Q_\imc\setminus\uns$. If we assume that $\underline{V}_{\adv,i}(q) \leq \min_{y\in q}V(y)$ for all $q\in Q_\imc\setminus\uns$, then we have for all $q\in Q_\imc\setminus\uns$:
		\begin{align*}
			\underline{V}_{\adv,i+1}(q) &= \underline{R}(q) + \gamma\sum_{q'\in Q_\imc}\underline{V}_{\adv,i}(q')\adv(q,q')\\
			&=\underline{R}(q) + \gamma\sum_{q'\in Q_\imc}\underline{V}_{\adv,i}(q')\int_{q'}T(dy'|y^\star(q))\\
			&\leq\min_{y\in q}R(y) + \\&\qquad+\gamma \sum_{q'\in Q_\imc}\min_{y\in q'}(V(y))\int_{q'}T(dy'|y^\star(q))\\
			&\leq R(y^\star(q)) + \gamma\sum_{q'\in Q_\imc}\int_{q'}V(y')T(dy'|y^\star(q))\\
			&=V(y^\star(q))=\min_{y\in q}V(y)
		\end{align*}
		where in the third step we used that $\underline{V}_{\adv,i}(q) \leq \min_{y\in q}V(y)$ for all $q\in Q_\imc\setminus\uns$ by the induction assumption and $\underline{V}_{\adv,i}(\uns) \leq \inf_{y\in Q}V(y)$ by our earlier proof, in the fourth step we used that $\min_{y\in q'}(V(y))\leq V(y')$ for all $y'\in q'$ and $\min_{y\in q}R(y)\leq R(y^\star(q))$, and in the fifth step we used that $y^\star(q) = \argmin_{y\in q}V(y)$. The proof is complete.
	\end{proof}
	\begin{remark}\label{remark_imc_reward_contrib}
		To the authors' knowledge, this is the first time, in the literature of IMC-abstractions of stochastic systems, that IMCs are employed to compute bounds on quantitative measures over a system's trajectories, such as cumulative rewards. In addition, we highlight that the bound holds even in the case where the system's state-space is unbounded and an unbounded set of it (the set $\overline{X}$) is abstracted by an absorbing state $\uns$. That is, the bound does take into account paths of the system that eventually leave $X$ and certainly contribute to the expectation of the cumulative reward.
	\end{remark}
	\begin{remark}
		The above proof does not provide a recipe to compute the bounds on $\expec_{\pr_\Y}(f(\omega))$, since this would assume knowledge of $y^\star(q)$; it only shows that there exists an adversary $\adv\in\Pi$ such that $\expec_{\adv}(\sum_{i}\gamma^i\underline{R}(\tilde{\omega}(i)))\leq\expec_{\pr_\Y}(f(\omega))$. It has been shown in \cite{givan_bmdps} that the bound ($\sup_{\adv\in\Pi}$ and) $\inf_{\adv\in\Pi}\expec_{\adv}(\sum_{i}\gamma^i\underline{R}(\tilde{\omega}(i)))$ can be obtained via a modified value iteration algorithm, with polynomial time-complexity. However, an important aspect that our proof shows is that value iteration provides a valid bound on  $\expec_{\pr_\Y}(f(\omega))$ in every time step $i$; thus, the algorithm can be terminated in an arbitrary number of steps, still providing sound results.
	\end{remark}
	Thus, given $\sys_{\imc}$, by defining rewards $\underline{R},\overline{R}$ as described in Theorem \ref{main_theorem}, and calculating $\inf_{\adv\in\Pi}\expec_{\adv}(\sum_{i}\gamma^i\underline{R}(\tilde{\omega}(i)))$ and $\sup_{\adv\in\Pi}\expec_{\adv}(\sum_{i}\gamma^i\overline{R}(\tilde{\omega}(i)))$ via the algorithm proposed in \cite{givan_bmdps}, we obtain non-trivial bounds on $\expec_{\pr_\Y}(f(\omega))$. To construct $\sys_\imc$, what remains is to determine $\check{P},\hat{P}$, according to \eqref{probability_bounds}. This is carried out in the next section.
	
	\section{Transition Probability Intervals}
	In this section, we derive $\check{P}$ and $\hat{P}$, according to \eqref{probability_bounds}. In what follows, for $s\in\mathbb{N}_{[0,k_\max]}$, we denote $\zeta(s;x)=\zeta_{s,x}$ and $\tilde{\zeta}_{s,x} = \begin{bmatrix}
		\zeta^\top_{1,x} &\zeta^\top_{2,x} &\dots &\zeta^\top_{s,x}
	\end{bmatrix}^\top$.
	
	Let us investigate equation \eqref{probability_bounds}, which indicates that we are interested in quantities ($\max_{x\in\reg}$ or) $\min_{x\in\reg}\pr(\zeta_{s',x}\in S, \tau(x)=s')$, where $S=\reg'$ or $S=\overline{X}$. The law of conditional probabilities implies: 
	\begin{equation}\label{IV_eq1}
		\begin{aligned}
			\min\limits_{x\in\reg}\pr(\zeta_{s',x}\in S,\tau(x)=s') \geq&\\ \min\limits_{x\in\reg}\pr(\zeta_{s',x}\in S|\tau(x)=s')\cdot\min\limits_{x\in\reg}\pr(\tau(x)=s')&
		\end{aligned}
	\end{equation} 
	and conversely for $\max_{x\in\reg}$. Hence, to determine $\check{P}$ and $\hat{P}$ according to \eqref{probability_bounds}, it suffices to minimize and maximize over $\reg$ the quantities $\pr(\tau(x)=s')$ and $\pr(\zeta_{s',x}\in S|\tau(x)=s')$. 
	For conciseness, we assume that $S$ is a compact polytope, since in the case of the unbounded $S=\overline{X}$ we can write:
	\begin{equation*}
		\pr(\zeta_{s',x}\in \overline{X}|\tau(x)=s') = 1 - \pr(\zeta_{s',x}\in X|\tau(x)=s'),
	\end{equation*}
	and focus on $\pr(\zeta_{s',x}\in X|\tau(x)=s')$ with $X$ being compact. In what follows, it is shown that optimizing over $\reg$ the quantities $\pr(\tau(x)=s')$ and $\pr(\zeta_{s',x}\in S|\tau(x)=s')$, can be reformulated as optimizing the integral of a Gaussian evaluated over a polytope, with its mean varying in another polytope. Such optimization problems have been effectively solved in \cite{luca_tac_2021}. We focus on minimization, as maximization is the same up to a reversal of inequalities.
	
	The following proposition, stemming from Assumption \ref{assum1}, is instrumental in our construction, as it enables computing probabilities of the type $\pr(\tilde{\zeta}_{s,x}\in S)$ as an integral of a Gaussian distribution over $S$:
	\begin{proposition}\label{normal}
		It holds that $\tilde{\zeta}_{s,x}\sim\normal(\mu_{\tilde{\zeta}_{s,x}},\Sigma_{\tilde{\zeta}_{s,x}})$, with $\mu_{\tilde{\zeta}_{s,x}}=\begin{bmatrix}
			\expec(\zeta^\top_{1,x})&\expec(\zeta^\top_{2,x})&\dots&\expec(\zeta^\top_{s,x})
		\end{bmatrix}^\top$,
		\begin{align*}
			&\Sigma_{\tilde{\zeta}_{s,x}}=\begin{bmatrix}
				\cov(1,1) &\cov(1,2)&\dots&\cov(1,s)\\
				\vdots &\vdots &\dots &\vdots\\
				\cov(s,1) &\cov(s,2) &\dots &\cov(s,s)
			\end{bmatrix}
		\end{align*}
		where $\expec(\zeta(t;x)) = [e^{At}(I+A^{-1}BK)-A^{-1}BK]x$,
		\begin{align*}
			&\cov(t_1,t_2) = \int_{0}^{\min(t_1,t_2)}e^{A(t_1-s)}B_wB_w^\top e^{A^\top(t_2-s)}ds
		\end{align*}
	\end{proposition}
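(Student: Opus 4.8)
The plan is to solve the linear SDE governing $\zeta(t;x)$ in closed form on the interevent interval and then read off the law of the stacked vector $\tilde{\zeta}_{s,x}$. By the time-homogeneity remark following \eqref{trig_cond} we may reason on $[0,\cdot)$ with deterministic initial condition $\zeta(0)=x$ and frozen input $BKx$, so on this interval \eqref{snh} reads $d\zeta(t)=(A\zeta(t)+BKx)dt+B_wdW(t)$. Applying variation of constants (i.e.\ Itô's lemma to $e^{-At}\zeta(t)$) gives the explicit solution
\begin{equation*}
\zeta(t;x)=e^{At}x+\int_0^t e^{A(t-r)}BKx\,dr+\int_0^t e^{A(t-r)}B_w\,dW(r).
\end{equation*}
The first two terms are deterministic, while the last is a Wiener integral of a deterministic matrix-valued integrand, and such an integral is a zero-mean Gaussian random vector. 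Hence each $\zeta(t;x)$ is an affine image of a Gaussian, therefore Gaussian.

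For the mean, I would take expectations, so that the Itô-integral term vanishes, and evaluate the deterministic integral by the substitution $u=t-r$ to obtain $\int_0^t e^{A(t-r)}BKx\,dr=A^{-1}(e^{At}-I)BKx$. Using that $A$ and $e^{At}$ commute, this rearranges into $[e^{At}(I+A^{-1}BK)-A^{-1}BK]x$, matching the asserted $\expec(\zeta(t;x))$; the invertibility of $A$ implicit in the statement is exactly what renders $A^{-1}$ meaningful here.

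For the covariance, I would compute the cross-covariance $\cov(\zeta(t_1;x),\zeta(t_2;x))$ of the two stochastic integrals via the Itô isometry. Since $W$ has independent increments, only the overlapping portion of the two integration ranges contributes, producing the upper limit $\min(t_1,t_2)$ and the claimed kernel $\int_0^{\min(t_1,t_2)}e^{A(t_1-r)}B_wB_w^\top e^{A^\top(t_2-r)}dr$. To upgrade the marginal Gaussianity of each $\zeta(t;x)$ to joint Gaussianity of the full stack $\tilde{\zeta}_{s,x}$, I would observe that every finite linear combination $\sum_{t}c_t^\top\zeta(t;x)$ is again a deterministic vector plus a single Wiener integral of a deterministic integrand, hence scalar Gaussian; by the Cram\'er--Wold device the stacked vector is jointly Gaussian with precisely the block mean and block covariance asserted.

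The main obstacle is not any individual computation, each being standard, but rather the correct bookkeeping of the overlapping integration ranges in the cross-covariance and, more conceptually, the justification of \emph{joint} (rather than merely marginal) Gaussianity of the stack. This is where the Cram\'er--Wold argument, resting on the fact that all the $\zeta(t;x)$ are linear functionals of one and the same Wiener process, does the real work. Finally, non-degeneracy of $\Sigma_{\tilde{\zeta}_{s,x}}$, which is what allows $\pr(\tilde{\zeta}_{s,x}\in S)$ to be written as a Gaussian integral over $S$ with a bona fide density, follows from the controllability of $(A,B_w)$ assumed in Assumption \ref{assum1}, as already noted there.
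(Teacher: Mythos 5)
Your proof is correct and takes essentially the same route as the paper, which simply cites the closed-form solution of the linear SDE (variation of constants) and applies the expectation and covariance operators to it; your variation-of-constants formula, the computation $\int_0^t e^{A(t-r)}BKx\,dr=A^{-1}(e^{At}-I)BKx$, and the It\^o-isometry derivation of $\cov(t_1,t_2)$ all match what that reference supplies. The only addition is your explicit Cram\'er--Wold argument for joint Gaussianity of the stacked vector, which the paper leaves implicit in the citation but which is a correct and welcome piece of bookkeeping.
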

	\begin{proof}
		Application of the expectation and covariance operators to the solution of linear SDE \eqref{snh} (see {\cite[pp. 96]{mao_book}}).
	\end{proof}
	\subsection{Probabilities on Interevent Times}
	Here, we focus on the second term of \eqref{IV_eq1}: $\min_{x\in\reg}\pr(\tau(x)=s)$; i.e., the probability that the interevent time is $s$ when starting from $\reg$. Define the set:
	\begin{equation*}
		\Phi(x):=\{y\in\real^n: \phi(y,x)\leq 0\}=\{y\in\real^n:|y-x|_\infty\leq\epsilon\},
	\end{equation*}
	which is such that $\phi(\zeta(t;x),x)>0\iff\zeta(t;x)\notin\Phi(x)$, where $\phi(\cdot)$ is the triggering function. Hence, for any $s<k_\max$, since the probability that the interevent time is $s$ is equal to the probability that the triggering function was negative at times $1,2,\dots,s-1$ and positive at $s$, we can write:\footnote{With a slight abuse of notation, $\Phi^0(x)=\{x\}$.}
	\begin{equation*}
		\begin{aligned}
			\pr(\tau(x)=s) = \pr(\tilde{\zeta}_{s,x}\in\Phi^{s-1}(x)\times\overline{\Phi}(x))=&\\
			\pr(\tilde{\zeta}_{s-1,x}\in\Phi^{s-1}(x))-\pr(\tilde{\zeta}_{s,x}\in\Phi^{s}(x))&
		\end{aligned}
	\end{equation*}
	The above implies:
	\begin{equation}\label{yo}
		\begin{aligned}
			\forall s<k_\max:\quad\min\limits_{x\in\reg}\pr(\tau(x)=s)\geq&\\
			\min\limits_{x\in\reg}\pr(\tilde{\zeta}_{s-1,x}\in\Phi^{s-1}(x))-\max\limits_{x\in\reg}\pr(\tilde{\zeta}_{s,x}\in\Phi^{s}(x))&
		\end{aligned}
	\end{equation}
	For $s=k_\max$, we have $\pr(\tau(x)=k_\max)=\pr(\tilde{\zeta}_{k_\max-1,x}\in\Phi^{k_\max-1}(x))$, which implies:
	\begin{equation}\label{yo2}
		\begin{aligned}
			\min\limits_{x\in\reg}\pr(\tau(x)=k_\max)=
			\min\limits_{x\in\reg}\pr(\tilde{\zeta}_{k_\max-1,x}\in\Phi^{k_\max-1}(x))
		\end{aligned}
	\end{equation}
	From \eqref{yo} and \eqref{yo2}, observe that it suffices to focus on quantities $\pr(\tilde{\zeta}_{s,x}\in\Phi^{s}(x))$ for any $s<k_\max$, in order to determine $\min_{x\in\reg}\pr(\tau(x)=s)$ or $\min_{x\in\reg}\pr(\tau(x)=k_\max)$. The proposition below paves the way for computing such probabilities:
	\begin{proposition}\label{prop_probability_zeta_tilde_phi}
		For all $s\in\{1,2,\dots,k_\max\}$ and $x\in\real^n$:
		\begin{equation*}
			\pr(\tilde{\zeta}_{s,x}\in\Phi^{s}(x)) = \int_{\Phi^{s}(0)}\normal(z|\mu_1(s,x),\Sigma_{\tilde{\zeta}_{s,x}})dz,
		\end{equation*}
		where:
		\begin{equation*}
			\mu_1(s,x) = \begin{bmatrix}
				e^{A}(I+A^{-1}BK)-A^{-1}BK-I_n\\ e^{2A}(I+A^{-1}BK)-A^{-1}BK-I_n\\ \vdots \\e^{sA}(I+A^{-1}BK)-A^{-1}BK-I_n
			\end{bmatrix}x
		\end{equation*}
	\end{proposition}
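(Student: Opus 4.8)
The plan is to compute the probability $\pr(\tilde{\zeta}_{s,x}\in\Phi^s(x))$ by translating the event $\{\tilde{\zeta}_{s,x}\in\Phi^s(x)\}$ into an equivalent event involving a centered region $\Phi^s(0)$, and then invoking Proposition~\ref{normal}, which already tells us that $\tilde{\zeta}_{s,x}$ is Gaussian with mean $\mu_{\tilde{\zeta}_{s,x}}$ and covariance $\Sigma_{\tilde{\zeta}_{s,x}}$. The key observation is that $\Phi(x)=\{y:|y-x|_\infty\le\epsilon\}$ is just the translate $\Phi(0)+x$ of the fixed box $\Phi(0)=\{y:|y|_\infty\le\epsilon\}$. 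Consequently $\Phi^s(x)=\Phi(x)\times\cdots\times\Phi(x)$ equals $\Phi^s(0)+\{x\}^s$, where $\{x\}^s$ denotes the stacked vector $[x^\top\,\cdots\,x^\top]^\top$ in the notation fixed in the preliminaries. The event we care about is therefore $\{\tilde{\zeta}_{s,x}-\{x\}^s\in\Phi^s(0)\}$.

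First I would make this translation precise. Since $\tilde\zeta_{s,x}$ is Gaussian, the shifted variable $\tilde\zeta_{s,x}-\{x\}^s$ is again Gaussian, with the \emph{same} covariance $\Sigma_{\tilde{\zeta}_{s,x}}$ and with mean $\mu_{\tilde{\zeta}_{s,x}}-\{x\}^s$. Thus
\begin{equation*}
	\pr(\tilde{\zeta}_{s,x}\in\Phi^s(x)) = \pr(\tilde\zeta_{s,x}-\{x\}^s\in\Phi^s(0)) = \int_{\Phi^s(0)}\normal(z\,|\,\mu_{\tilde{\zeta}_{s,x}}-\{x\}^s,\Sigma_{\tilde{\zeta}_{s,x}})\,dz,
\end{equation*}
so it only remains to identify the shifted mean $\mu_{\tilde{\zeta}_{s,x}}-\{x\}^s$ with the claimed $\mu_1(s,x)$. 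Here I would just read off the blocks: by Proposition~\ref{normal} the $t$-th block of $\mu_{\tilde{\zeta}_{s,x}}$ is $\expec(\zeta_{t,x})=[e^{At}(I+A^{-1}BK)-A^{-1}BK]x$, and subtracting the corresponding block $x=I_n x$ of $\{x\}^s$ yields exactly $[e^{At}(I+A^{-1}BK)-A^{-1}BK-I_n]x$, which is precisely the $t$-th block of $\mu_1(s,x)$. Stacking over $t=1,\dots,s$ completes the identification.

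The only genuinely delicate point is the set-translation identity $\Phi^s(x)=\Phi^s(0)+\{x\}^s$ together with the change-of-variables step that turns an integral of $\normal(\cdot\,|\,\mu_{\tilde\zeta_{s,x}},\Sigma)$ over $\Phi^s(x)$ into an integral over the centered $\Phi^s(0)$; everything else is bookkeeping with the block structure of the mean. Substituting $z=y-\{x\}^s$ shifts both the integration domain and the mean by $\{x\}^s$, which is exactly why the covariance is untouched and only the mean is relabeled. I expect this substitution to be the step most worth stating carefully, since it is where the dependence on $x$ is moved entirely into the Gaussian's mean, matching the structure $\pr(\tilde\zeta_{s,x}\in\Phi^s(x))=\int_{\Phi^s(0)}\normal(z\,|\,\mu_1(s,x),\Sigma_{\tilde\zeta_{s,x}})\,dz$ that the later optimization over $x\in\reg$ (an integral of a Gaussian over a fixed polytope with mean varying in a polytope, as in \cite{luca_tac_2021}) relies upon.
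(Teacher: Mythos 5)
Your proposal is correct and follows essentially the same route as the paper's proof: exploit the affinity $\Phi^s(x)=\Phi^s(0)+\{x\}^s$ and shift the Gaussian's mean by $\{x\}^s$ via a change of variables, then read off $\mu_1(s,x)=\mu_{\tilde{\zeta}_{s,x}}-\{x\}^s$ blockwise from Proposition~\ref{normal}. The only difference is presentational (you phrase the shift as acting on the random variable rather than on the integration domain, and you spell out the block identification that the paper leaves implicit).
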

	\begin{proof}
		See Appendix.
	\end{proof}
	Using the above proposition, we arrive at the following:
	\begin{corollary}[to Proposition \ref{prop_probability_zeta_tilde_phi}]
		The following holds:
		\begin{equation}\label{minimization_event_times}
			\begin{aligned}
				\min\limits_{x\in\reg}\pr(\tilde{\zeta}_{s,x}\in\Phi^{s}(x))=
				\min\limits_{y\in\mu_1(s,\reg)}\int_{\Phi^{s}(0)}\normal(z|y,\Sigma_{\tilde{\zeta}_{s,x}})dz&
			\end{aligned}
		\end{equation}
		where $\mu_1(s,\reg )$ is the convex polytope: $\mu_1(s,\reg)=\{y\in\real^{sn}:y=\mu_1(s,x),x\in\reg\}$. The same holds for $\max_{x\in\reg}$.
	\end{corollary}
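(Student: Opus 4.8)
The plan is to reduce the minimization over $x\in\reg$ to a minimization over the image set $\mu_1(s,\reg)$ by exploiting the fact that the only dependence on $x$ enters through the mean of the Gaussian. First I would invoke Proposition \ref{prop_probability_zeta_tilde_phi} to rewrite the objective $\pr(\tilde{\zeta}_{s,x}\in\Phi^s(x))$ as the Gaussian integral $\int_{\Phi^s(0)}\normal(z|\mu_1(s,x),\Sigma_{\tilde{\zeta}_{s,x}})\,dz$ over the \emph{fixed} polytope $\Phi^s(0)$. The crucial observation, which I would make explicit next, is that although the notation $\Sigma_{\tilde{\zeta}_{s,x}}$ carries an $x$, the covariance is in fact independent of $x$: by Proposition \ref{normal} each block $\cov(t_1,t_2)$ is built only from $A$ and $B_w$ and contains no $x$. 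Hence the integrand, and therefore the whole probability, depends on the starting point $x$ solely through the mean vector $\mu_1(s,x)$.

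Define $g(y):=\int_{\Phi^s(0)}\normal(z|y,\Sigma_{\tilde{\zeta}_{s,x}})\,dz$, a function of the mean $y\in\real^{sn}$ alone. Then the probability equals the composition $g(\mu_1(s,x))$, and I would use the elementary fact that minimizing a composed map $g\circ\mu_1(s,\cdot)$ over $\reg$ equals minimizing $g$ over the image $\mu_1(s,\reg)=\{\mu_1(s,x):x\in\reg\}$. This immediately yields the claimed identity $\min_{x\in\reg}\pr(\tilde{\zeta}_{s,x}\in\Phi^s(x))=\min_{y\in\mu_1(s,\reg)}g(y)$, and running the same argument with inequalities reversed gives the $\max$ version. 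It then remains to argue that $\mu_1(s,\reg)$ is a convex polytope, as asserted: since $\mu_1(s,x)$ is a linear function of $x$ (the stacked matrix displayed in Proposition \ref{prop_probability_zeta_tilde_phi} acting on $x$), and the image of a convex polytope under a linear map is again a convex polytope, the set $\mu_1(s,\reg)$ is a convex polytope, which is exactly the structure required to invoke the optimization techniques of \cite{luca_tac_2021}.

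I do not expect a substantive obstacle here: the statement is essentially a reparametrization, and the only point deserving genuine care is the verification that $\Sigma_{\tilde{\zeta}_{s,x}}$ does not depend on $x$. Without this, two points of $\reg$ mapping to the same mean could carry different covariances, and the clean reduction of the optimization to the single variable $y\in\mu_1(s,\reg)$ would break down; with it, the reduction is immediate. The linear-image-is-a-polytope step is standard and can be dispatched in one line.
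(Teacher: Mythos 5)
Your proposal is correct and follows the same route as the paper: apply Proposition \ref{prop_probability_zeta_tilde_phi} to express the probability as a Gaussian integral over the fixed set $\Phi^s(0)$ depending on $x$ only through the mean $\mu_1(s,x)$, reparametrize the optimization over the image set, and note that $\mu_1(s,\reg)$ is a convex polytope because $\mu_1(s,\cdot)$ is linear. Your explicit check that $\Sigma_{\tilde{\zeta}_{s,x}}$ is independent of $x$ (visible from the formula for $\cov(t_1,t_2)$ in Proposition \ref{normal}) is a point the paper leaves implicit, and it is indeed the step on which the reduction hinges.
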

	\begin{proof}
		It is a straightforward result of Proposition \ref{prop_probability_zeta_tilde_phi}. Note that $\mu_1(s,\reg)$ is a convex polytope in $\real^{sn}$, since $\mu_1(s,x)$ is linear on $x$ and $\reg$ is a convex polytope.
	\end{proof}
	Optimization problem \eqref{minimization_event_times} requires optimizing the integral of a Gaussian over a polytope (that is, $\Phi(0)$), with its mean varying in a convex polytope (that is, $\mu_1(s,\reg)$). Similar problems have been effectively solved in \cite{luca_tac_2021}: for obtaining the optimal points $y_{opt}\in\mu_1(s,\reg)$, certain KKT-like conditions have been derived, facilitating optimization. To compute the optimal value of the integral at $y_{opt}$, we could: either 1) apply a whitening transformation $T$ transforming $\Sigma_{\tilde{\zeta}_{s,x}}$ to identity (such a $T$ always exists), under/over-approximate $T\cdot\Phi^{s}(0)$ by hyperrectangles, and compute the integral by breaking it down to uni-dimensional integrals (as done in \cite{luca_tac_2021}), or 2) use simple numerical techniques to obtain bounds on it.
	\begin{remark}
		A different way of solving optimization problem \eqref{minimization_event_times}, is by regular convex-optimization methods. By modifying the proof of {\cite[Proposition 2]{Luca_hscc_2019}}, we can prove that the integral \eqref{minimization_event_times} is log-concave on $y$ (omitted, due to space limitations).
	\end{remark}
	\begin{remark}\label{remark_luca_extension_affinity}
		Compared to \cite{luca_tac_2021}, where $\min\limits_{x\in\reg}\pr(\tilde{\zeta}_{s,x}\in S)$ was determined for any fixed polytope $S$, here an added difficulty is that we have to compute $\min\limits_{x\in\reg}\pr(\tilde{\zeta}_{s,x}\in \Phi^s(x))$; that is, a probability of landing in a set varying w.r.t. the optimization variable $x$. This issue has been circumvented, by employing the affinity of $\Phi(x)$ w.r.t. $x$ (see proof of Proposition \ref{prop_probability_zeta_tilde_phi}).
	\end{remark}
	Once we have computed $\max_{x\in\reg}\pr(\tilde{\zeta}_{s,x}\in\Phi^{s}(x))$ and $\min_{x\in\reg}\pr(\tilde{\zeta}_{s-1,x}\in\Phi^{s-1}(x))$ as described above, we readily obtain values for $\min_{x\in\reg}\pr(\tau(x)=s)$, with $s<k_\max$, and $\min_{x\in\reg}\pr(\tau(x)=k_\max)$, by \eqref{yo} and \eqref{yo2} respectively. Similar steps are followed for maxima.
	
	\subsection{Conditional Probabilities on States at Event Times}
	To complete the computation of the transition probability intervals, what is left is determining the first term of \eqref{IV_eq1}: $\min\limits_{x\in\reg}\pr(\zeta(s;x)\in S|\tau(x)=s)$. For any set $Z\subseteq\real^n$ and any $l\in\mathbb{N}_{[0,s]}$ , we adopt the shorthand notation:
	\begin{align*}
		&\pr(Z|\Phi^{l}(x)) \equiv \pr(\zeta_{s,x}\in Z|\tilde{\zeta}_{l,x}\in\Phi^{l}(x))\\
		&\pr(Z|\Phi^{l-1}(x)\times\overline{\Phi}(x)) \equiv\\&\qquad\qquad\qquad \pr(\zeta_{s,x}\in Z|\tilde{\zeta}_{l,x}\in\Phi^{l-1}(x)\times\overline{\Phi}(x))
	\end{align*}
	
	\begin{proposition}\label{prop_min_conditional}
		For any $s<k_\max$, the following holds:
		\begin{equation}\label{min_conditional}
			\begin{aligned}
				&\min_{x\in\reg}\pr(\zeta_{s,x}\in S|\tau(x)=s)\geq \\ &\frac{1}{(1-\min_{x\in\reg}\pr(\Phi(x)|\Phi^{s-1}(x)))}\cdot\bigg[\min_{x\in\reg}\pr(S|\Phi^{s-1}(x)) -\\&\qquad\qquad\qquad\max_{x\in\reg}\pr(S|\Phi^{s}(x))\cdot\max_{x\in\reg}\pr(\Phi(x)|\Phi^{s-1}(x))\bigg]
			\end{aligned}
		\end{equation}
		For $s=k_\max$, the following holds:
		\begin{equation}\label{min_conditional2}
			\begin{aligned}
				&\min_{x\in\reg}\pr(\zeta_{s,x}\in S|\tau(x)=k_\max)=\min_{x\in\reg}\pr( S|\Phi^{k_\max-1})
			\end{aligned}
		\end{equation}
		Similar results hold for $\max_{x\in\reg}\pr(\zeta_{s,x}\in S|\tau(x)=s)$.
	\end{proposition}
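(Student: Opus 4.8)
The plan is to prove the $s<k_\max$ bound \eqref{min_conditional} by first deriving, for a single fixed $x$, an \emph{exact} identity for the conditional probability, and only then optimizing over $\reg$. The starting point is that, for $s<k_\max$, the event $\{\tau(x)=s\}$ coincides with $\{\tilde\zeta_{s,x}\in\Phi^{s-1}(x)\times\overline\Phi(x)\}$ (no triggering at $1,\dots,s-1$, triggering at $s$). I would therefore condition on the weaker event $\{\tilde\zeta_{s-1,x}\in\Phi^{s-1}(x)\}$ and split according to whether $\zeta_{s,x}$ falls in $\Phi(x)$ or $\overline\Phi(x)$. Since $\{\tilde\zeta_{s-1,x}\in\Phi^{s-1}(x)\}\cap\{\zeta_{s,x}\in\Phi(x)\}=\{\tilde\zeta_{s,x}\in\Phi^{s}(x)\}$ and the complementary branch is exactly $\{\tau(x)=s\}$, the law of total probability (in the shorthand of the statement) gives
\begin{align*}
	\pr(S\mid\Phi^{s-1}(x)) = {}&\pr(S\mid\Phi^{s}(x))\,\pr(\Phi(x)\mid\Phi^{s-1}(x))\\
	&{}+\pr(S\mid\Phi^{s-1}(x)\times\overline\Phi(x))\big(1-\pr(\Phi(x)\mid\Phi^{s-1}(x))\big).
\end{align*}
Solving this linear relation for the term $\pr(S\mid\Phi^{s-1}(x)\times\overline\Phi(x))=\pr(\zeta_{s,x}\in S\mid\tau(x)=s)$ yields the exact expression
\begin{equation*}
	\pr(\zeta_{s,x}\in S\mid\tau(x)=s) = \frac{\pr(S\mid\Phi^{s-1}(x)) - \pr(S\mid\Phi^{s}(x))\,\pr(\Phi(x)\mid\Phi^{s-1}(x))}{1-\pr(\Phi(x)\mid\Phi^{s-1}(x))},
\end{equation*}
which is well-defined because item \ref{assum1_controllability} of Assumption \ref{assum1} renders $\zeta_{s,x}$ a non-degenerate Gaussian, so $\pr(\Phi(x)\mid\Phi^{s-1}(x))<1$ and the denominator is strictly positive.

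The second step is to push $\min_{x\in\reg}$ through this ratio. I would lower-bound the numerator by $N_{\min}:=\min_{x\in\reg}\pr(S\mid\Phi^{s-1}(x)) - \max_{x\in\reg}\pr(S\mid\Phi^{s}(x))\cdot\max_{x\in\reg}\pr(\Phi(x)\mid\Phi^{s-1}(x))$, splitting the product term using that both of its factors are non-negative, and upper-bound the denominator by $D_{\max}:=1-\min_{x\in\reg}\pr(\Phi(x)\mid\Phi^{s-1}(x))$, which is the exact form appearing on the right-hand side of \eqref{min_conditional}. It then remains to show $\frac{N(x)}{D(x)}\geq\frac{N_{\min}}{D_{\max}}$ for every $x\in\reg$, after which taking the minimum over $x$ finishes the bound.

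The delicate point, and the one I expect to be the main obstacle, is precisely this monotonicity of the ratio, because the numerator $N(x)$ is a difference and need not be non-negative a priori. I would handle it with a short case split: if $N_{\min}\geq0$, then $N(x)\geq N_{\min}\geq0$ together with $D(x)\leq D_{\max}$ give $\tfrac{N(x)}{D(x)}\geq\tfrac{N_{\min}}{D(x)}\geq\tfrac{N_{\min}}{D_{\max}}$; if instead $N_{\min}<0$, then $\tfrac{N_{\min}}{D_{\max}}<0$ while the left-hand side is a genuine probability and hence non-negative, so the inequality holds trivially. Either way the claimed bound follows, and the positivity of $D_{\max}$ needed throughout is again guaranteed by non-degeneracy.

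Finally, the case $s=k_\max$ needs no ratio at all: reaching the forced cap means only that the system never triggered earlier, so $\{\tau(x)=k_\max\}=\{\tilde\zeta_{k_\max-1,x}\in\Phi^{k_\max-1}(x)\}$ and hence $\pr(\zeta_{k_\max,x}\in S\mid\tau(x)=k_\max)=\pr(S\mid\Phi^{k_\max-1}(x))$, whose minimization over $\reg$ is exactly \eqref{min_conditional2}. The companion statements for $\max_{x\in\reg}$ are obtained by the identical argument with the roles of $\min$ and $\max$ interchanged in the numerator and denominator bounds.
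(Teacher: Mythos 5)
Your proof follows essentially the same route as the paper's: apply the law of total probability conditioned on $\Phi^{s-1}(x)$, solve the resulting linear relation for $\pr(S|\Phi^{s-1}(x)\times\overline{\Phi}(x))=\pr(\zeta_{s,x}\in S|\tau(x)=s)$, and then optimize each term over $\reg$. You in fact supply more detail than the paper, which dismisses the final step with ``it is then clear how \eqref{min_conditional} is derived'': your case split on the sign of the numerator (and the non-degeneracy argument for strict positivity of the denominator) correctly justifies pushing the minimum through the ratio, which is the only genuinely delicate point.
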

	\begin{proof}See Appendix.
	\end{proof}
	Equations \eqref{min_conditional} and \eqref{min_conditional2} indicate that it suffices to focus on quantities $\pr(S|\Phi^l(x))$ and $\pr(\Phi(x)|\Phi^l(x))$, with $l<k_\max$.
	The following result provides the probability distribution giving rise to these conditional probabilities:
	\begin{corollary}[to Proposition \ref{normal}]\label{cor_conditional_normal}
		Consider the random variable $\xi=(\zeta_{s,x}|\tilde{\zeta}_{l,x}=v)$, where $l\in\mathbb{N}_{[0,s]}$, and $v\in\real^{ln}$. Then $\xi\sim\normal(\mu_\xi(x,v), \Sigma_\xi)$, where:
		\begin{align*}
			&\mu_\xi(x,v) = \expec(\zeta_{s,x})-\Sigma_{\zeta_{s,x},\tilde{\zeta}_{l,x}}\Sigma_{\tilde{\zeta}_{l,x}}^{-1}(v-\expec(\tilde{\zeta}_{l,x}))\\
			&\Sigma_\xi = \Sigma_{\zeta_{s,x}} - \Sigma_{\zeta_{s,x},\tilde{\zeta}_{l,x}}\Sigma_{\tilde{\zeta}_{l,x}}^{-1}\Sigma_{\tilde{\zeta}_{l,x},\zeta_{s,x}},
		\end{align*}
		where $\Sigma_{\zeta_{s,x}} = \cov(s,s)$, $\Sigma_{\tilde{\zeta}_{l,x}}$, $\expec(\tilde{\zeta}_{l,x})$ and $\expec(\zeta_{s,x})$ obtained from Proposition \ref{normal}, and $\Sigma_{\zeta_{s,x},\tilde{\zeta}_{l,x}}=\Sigma_{\tilde{\zeta}_{l,x},\zeta_{s,x}}^\top = \begin{bmatrix}
			\cov(s,1) &\cov(s,2) &\dots &\cov(s,l)
		\end{bmatrix}$.
	\end{corollary}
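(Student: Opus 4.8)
The plan is to identify $\xi$ as the conditional law of one block of a jointly Gaussian vector given another block, and to read off every ingredient from Proposition~\ref{normal}.

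The first step is to argue joint Gaussianity of the stacked vector $\begin{bmatrix}\zeta_{s,x}^\top & \tilde{\zeta}_{l,x}^\top\end{bmatrix}^\top$. Since $l\leq s$, both pieces are coordinate sub-blocks of the single Gaussian vector $\tilde{\zeta}_{s,x}$ of Proposition~\ref{normal}: $\tilde{\zeta}_{l,x}$ is its first $ln$ entries and $\zeta_{s,x}$ is its last $n$ entries. Thus the stacked vector equals $M\tilde{\zeta}_{s,x}$ for a fixed $0$--$1$ selection matrix $M$, and since any linear image of a Gaussian vector is Gaussian, the stacked vector is jointly Gaussian.

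The second step is pure bookkeeping: I would extract the mean and covariance blocks of this stacked vector from Proposition~\ref{normal}. The marginal means are $\expec(\zeta_{s,x})$ and $\expec(\tilde{\zeta}_{l,x})$ (the leading $ln$ entries of $\mu_{\tilde{\zeta}_{s,x}}$); the marginal covariances are $\Sigma_{\zeta_{s,x}}=\cov(s,s)$ and $\Sigma_{\tilde{\zeta}_{l,x}}$ (the leading $ln\times ln$ principal block of $\Sigma_{\tilde{\zeta}_{s,x}}$); and the cross-covariance is the row block $\Sigma_{\zeta_{s,x},\tilde{\zeta}_{l,x}}=\begin{bmatrix}\cov(s,1) & \cdots & \cov(s,l)\end{bmatrix}$, with $\Sigma_{\tilde{\zeta}_{l,x},\zeta_{s,x}}$ its transpose---exactly the quantities named in the statement.

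The result then follows from the classical Gaussian conditioning identity: for jointly Gaussian $(Y,Z)$, the conditional law $Y\mid Z=v$ is Gaussian, with mean obtained from $\expec(Y)$ via the linear-regression correction $\Sigma_{YZ}\Sigma_{ZZ}^{-1}(v-\expec(Z))$ and covariance equal to the Schur complement $\Sigma_{YY}-\Sigma_{YZ}\Sigma_{ZZ}^{-1}\Sigma_{ZY}$. Applying this with $Y=\zeta_{s,x}$ and $Z=\tilde{\zeta}_{l,x}$ and substituting the blocks identified above yields the stated $\mu_\xi(x,v)$ and $\Sigma_\xi$. The only non-cosmetic point is the invertibility of $\Sigma_{\tilde{\zeta}_{l,x}}$ demanded by the conditioning formula; this is precisely where Assumption~\ref{assum1}, item~\ref{assum1_controllability} (controllability of $(A,B_w)$) enters, since it renders $\tilde{\zeta}_{s,x}$---and hence each coordinate sub-block such as $\tilde{\zeta}_{l,x}$---a non-degenerate Gaussian, so that $\Sigma_{\tilde{\zeta}_{l,x}}$ is positive definite. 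I expect this invertibility check to be the only genuine obstacle; everything else is the mechanical identification of submatrices.
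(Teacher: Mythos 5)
Your proposal is correct and follows exactly the route the paper takes: the paper's proof is the one-line invocation of the standard Gaussian conditioning formula applied to the blocks of $\tilde{\zeta}_{s,x}$ from Proposition \ref{normal}, which is precisely what you carry out in detail. Your additional remark that the invertibility of $\Sigma_{\tilde{\zeta}_{l,x}}$ rests on Assumption \ref{assum1}, item \ref{assum1_controllability}, is a worthwhile point the paper leaves implicit (it notes non-degeneracy only when stating the assumption), but it does not constitute a different argument.
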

	\begin{proof}
		Straightforward application of the well-known formula for conditional normal distributions.
	\end{proof}
	Based on the above proposition, we get the following:
	\begin{proposition}\label{prop_minimization_conditional}
		For any $l\in\mathbb{N}_{[0,s]}$:
		\begin{equation}\label{minimization_conditional}
			\begin{aligned}
				&\min_{x\in\reg}\pr(S|\Phi^{l}(x))\geq
				\min_{y\in\mathcal{P}_1}\int_{S}\normal(z|y,\Sigma_\xi)dz\\
				&\min_{x\in\reg}\pr(\Phi(x)|\Phi^{l}(x))\geq
				\min_{y\in\mathcal{P}_2}\int_{\Phi(0)}\normal(z|y,\Sigma_\xi)dz
			\end{aligned}
		\end{equation}
		where $\mathcal{P}_1$ and $\mathcal{P}_2$ are the polytopes:
		\begin{align*}
			&\mathcal{P}_1=\{y\in\real^{n}:y=\mu_\xi(x,v+\{x\}^l), x\in \reg, v\in\Phi^l(0)\}\\
			&\mathcal{P}_2=\{y\in\real^{n}:y=\mu_\xi(x,v+\{x\}^l)\hspace{-.5mm}-\hspace{-.5mm}x, x\in \reg, v\in\Phi^l(0)\}
		\end{align*}
		The same holds for $\max_{x\in\reg}$.
	\end{proposition}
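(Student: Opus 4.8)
The plan is to reduce the set-conditioned probability $\pr(Z|\Phi^l(x))$, which is awkward because we condition on the event $\{\tilde{\zeta}_{l,x}\in\Phi^l(x)\}$ rather than on a single point, to a mixture of the point-conditioned probabilities handled by Corollary~\ref{cor_conditional_normal}, and then bound that mixture from below by its minimum. Writing $p_{\tilde{\zeta}_{l,x}}$ for the density of $\tilde{\zeta}_{l,x}$ (a non-degenerate Gaussian by Proposition~\ref{normal}), the law of total probability gives
\begin{equation*}
\pr(Z|\Phi^l(x)) = \int_{\Phi^l(x)}\pr(\zeta_{s,x}\in Z\mid\tilde{\zeta}_{l,x}=v)\,\frac{p_{\tilde{\zeta}_{l,x}}(v)}{\pr(\tilde{\zeta}_{l,x}\in\Phi^l(x))}\,dv,
\end{equation*}
where the weight $p_{\tilde{\zeta}_{l,x}}(v)/\pr(\tilde{\zeta}_{l,x}\in\Phi^l(x))$ is a genuine probability density on $\Phi^l(x)$ (it integrates to one). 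Hence $\pr(Z|\Phi^l(x))$ is a weighted average of point-conditioned probabilities over $v\in\Phi^l(x)$, and is therefore bounded below by their minimum over that compact set.

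Next I would invoke Corollary~\ref{cor_conditional_normal} to replace each point-conditioned probability by an explicit Gaussian integral, $\pr(\zeta_{s,x}\in Z\mid\tilde{\zeta}_{l,x}=v)=\int_Z\normal(z\mid\mu_\xi(x,v),\Sigma_\xi)\,dz$, noting crucially that the conditional covariance $\Sigma_\xi$ depends on neither $v$ nor $x$, so that all the dependence is carried by the mean $\mu_\xi(x,v)$. Combining the two previous steps yields
\begin{equation*}
\pr(Z|\Phi^l(x))\ \geq\ \min_{v\in\Phi^l(x)}\int_Z\normal(z\mid\mu_\xi(x,v),\Sigma_\xi)\,dz,
\end{equation*}
and minimizing the left-hand side over $x\in\reg$ lets me absorb both minimizations into a single optimization over the mean.

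To arrive at the stated polytopes I would reparametrize the conditioning set. Since $\Phi(x)=x+\Phi(0)$, we have $\Phi^l(x)=\{x\}^l+\Phi^l(0)$, so every $v\in\Phi^l(x)$ can be written as $v=v'+\{x\}^l$ with $v'\in\Phi^l(0)$. Because $\mu_\xi(x,v)$ is affine in $(x,v)$ by Corollary~\ref{cor_conditional_normal}, the composite $(x,v')\mapsto\mu_\xi(x,v'+\{x\}^l)$ is affine, so as $x$ ranges over the polytope $\reg$ and $v'$ over the polytope $\Phi^l(0)$ its image is exactly the polytope $\mathcal{P}_1$; this settles the first inequality with $Z=S$. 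For the second inequality the integration region $Z=\Phi(x)$ itself moves with $x$, so I would additionally apply the change of variables $z\mapsto z-x$, which fixes the domain to $\Phi(0)$ and shifts the Gaussian mean to $\mu_\xi(x,v)-x$, whose image under the same reparametrization is $\mathcal{P}_2$.

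The main obstacle is the set-conditioning step: one must carefully justify the disintegration rewriting the set-conditioned probability as an average of point-conditioned ones, and verify that the averaging weights form a probability density, since this is precisely what legitimizes bounding the average below by the minimum over $\Phi^l(x)$. The moving integration domain $\Phi(x)$ in the second case is a secondary technical point, resolved cleanly by the translation $z\mapsto z-x$; the affinity of $\mu_\xi$ then guarantees that $\mathcal{P}_1$ and $\mathcal{P}_2$ are indeed polytopes, as required for the optimization machinery of \cite{luca_tac_2021} to apply.
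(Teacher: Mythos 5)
Your proposal is correct and follows essentially the same route as the paper: bound the set-conditioned probability below by the minimum over point-conditioned probabilities, apply Corollary \ref{cor_conditional_normal}, shift the conditioning set via $\Phi^l(x)=\Phi^l(0)+\{x\}^l$, translate $z\mapsto z-x$ for the moving domain $\Phi(x)$, and use affinity of $\mu_\xi$ to identify the images as the polytopes $\mathcal{P}_1,\mathcal{P}_2$. The only difference is that you explicitly justify the disintegration of the set-conditioned probability into a mixture of point-conditioned ones, a step the paper's proof leaves implicit in its first inequality.
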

	\begin{proof}
		See Appendix
	\end{proof}
	All individual terms in \eqref{min_conditional} and \eqref{min_conditional2} can be obtained via solving optimization problems like \eqref{minimization_conditional}, which can be solved similarly to \eqref{minimization_event_times}. As soon as we determine all such terms, we obtain a lower bound on $\min_{x\in\reg}\pr(\zeta_{s,x}\in S|\tau(x)=s)$. 
	Finally, now that both quantities on the right hand-side of \eqref{IV_eq1} have been computed, we obtain $\check{P}$ and $\hat{P}$ as described right below \eqref{IV_eq1}, and the construction of $\sys_\imc$ is complete. 
	\section{Conclusion and Future Work}
	We have abstracted the sampling behaviour of stochastic linear PETC systems via IMCs. Specifically, we have constructed IMCs and corresponding suitable rewards, that can be employed for computing bounds on expectations of functions of sequences of state-measurements and interevent times, which can be expressed as discounted cumulative sums of rewards. We have demonstrated that such functions can express various sampling performance indicators, and argued that our results are extendable to more general functions, encoding total or average rewards, $\omega-$regular properties, etc. Unfortunately, numerical examples have been omitted, due to space limitations. Future work will focus on the following: a) considering more general functions and sampling indicators, b) endowing the IMCs with actions, enabling scheduling of ETC traffic in networks with probabilistic safety guarantees, c) extending to more general classes of systems and triggering functions, and d) providing extensive experimental results.
	
	\section*{Appendix}
	\begin{proof}[\textbf{Proof of Proposition \ref{prop_probability_zeta_tilde_phi}}]
		$\Phi(x)$ is affine on $x$: $\Phi(x)=\Phi(0)+\{x\}$, where `+' here denotes Minkowski sum. Employing Proposition \ref{normal}:
		\begin{align*}
			\pr(\tilde{\zeta}_{s,x}\in\Phi^{s}(x)) = \int_{\Phi^{s}(x)}\normal(z|\mu_{\tilde{\zeta}_{s,x}},\Sigma_{\tilde{\zeta}_{s,x}})dz&\\
			=\int_{\Phi^{s}(0)+\{x\}^{s}}\normal(z|\mu_{\tilde{\zeta}_{s,x}},\Sigma_{\tilde{\zeta}_{s,x}})dz&\\
			=\int_{\Phi^{s}(0)}\normal(z|\underbrace{\mu_{\tilde{\zeta}_{s,x}}-\{x\}^{s}}_{\mu_1(s,x)},\Sigma_{\tilde{\zeta}_{s,x}})dz&
		\end{align*}
	\end{proof}
	\begin{proof}[\textbf{Proof of Proposition \ref{prop_min_conditional}}]
		From the law of total probability:
		\begin{align*}
			&\pr(S|\Phi^{s-1}(x)) =\\ &=\pr(S|\Phi^{s-1}(x)\times\overline{\Phi}(x))\cdot\pr(\overline{\Phi}(x)|\Phi^{s-1}(x)) +\\ &\qquad\qquad\qquad\pr(S|\Phi^{s}(x))\cdot\pr(\Phi(x)|\Phi^{s-1}(x))\\
			&=\pr(S|\Phi^{s-1}(x)\times\overline{\Phi}(x))\cdot(1-\pr(\Phi(x)|\Phi^{s-1}(x))) +\\ &\qquad\qquad\qquad\pr(S|\Phi^{s}(x))\cdot\pr(\Phi(x)|\Phi^{s-1}(x))
		\end{align*}
		Thus:
		\begin{align*}
			&\pr(\zeta_{s,x}\in S|\tau(x)=s)=\\
			&\pr(S|\Phi^{s-1}(x)\times\overline{\Phi}(x)) = \frac{1}{(1-\pr(\Phi(x)|\Phi^{s-1}(x)))}\cdot\\&\cdot\bigg[\pr(S|\Phi^{s-1}(x)) -\pr(S|\Phi^{s}(x))\cdot\pr(\Phi(x)|\Phi^{s-1}(x))\bigg]
		\end{align*}
		It is then clear how \eqref{min_conditional} is derived. Deriving \eqref{min_conditional2} is straightforward.
	\end{proof}
	\begin{proof}[\textbf{Proof of Proposition \ref{prop_minimization_conditional}}]
		By Corollary \ref{cor_conditional_normal}, we have:
		\begin{equation*}
			\pr(\zeta_{s,x}\in S|\tilde{\zeta}_{l,x}=v)=\int_{S}\normal(z|\mu_\xi(x,v),\Sigma_\xi)dz
		\end{equation*}
		Thus:
		\begin{align*}
			\min_{x\in\reg}\pr(S|\Phi^l(x)) =\min_{x\in\reg}\pr(\zeta_{s,x}\in S|\tilde{\zeta}_{l,x}\in\Phi^l(x))\geq&\\
			\min_{(x,v)\in\reg\times\Phi^l(x)}\int_{S}\normal(z|\mu_\xi(x,v),\Sigma_\xi)dz=&\\
			\min_{(x,v)\in\reg\times\Phi^l(0)}\int_{S}\normal(z|\mu_\xi(x,v+\{x\}^l),\Sigma_\xi)dz=&\\
			\min_{y\in\mathcal{P}_1}\int_{S}\normal(z|y,\Sigma_\xi)dz
		\end{align*} 
		where we made the change of variables: $y=\mu_\xi(x,v+\{x\}^l)$. Similarly:
		\begin{align*}
			\min_{x\in\reg}\pr(\Phi(x)|\Phi^l(x)) \geq&\\
			\min_{(x,v)\in\reg\times\Phi^l(0)}\int_{\Phi(x)}\normal(z|\mu_\xi(x,v+\{x\}^l),\Sigma_\xi)dz=&\\
			\min_{(x,v)\in\reg\times\Phi^l(0)}\int_{\Phi(0)}\normal(z|\mu_\xi(x,v+\{x\}^l)-x,\Sigma_\xi)dz=&\\
			\min_{y\in\mathcal{P}_2}\int_{\Phi(0)}\normal(z|y,\Sigma_\xi)dz&
		\end{align*}

		Since $\mu_\xi(x,v-\{x\}^l)$ is linear on both $x$ and $v$ (see Corollary \ref{cor_conditional_normal}), then $\mathcal{P}_1,\mathcal{P}_2$ are polytopes.
	\end{proof}

	\bibliography{stoch_abs_bib.bib}
	\bibliographystyle{IEEEtran}

\end{document}